\newtheorem{theorem}{Theorem}
\newtheorem{lemma}{Lemma}
\newtheorem{property}{Property}
\newtheorem{proposition}{Proposition}
\newenvironment{proof}{{\noindent\it Proof}\quad}{\par}
\newcommand{\F}{\ensuremath{\mathbb F}}
\newcommand{\Z}{\ensuremath{\mathbb Z}}
\newcommand{\C}{\ensuremath{\mathbb C}}
\newcommand{\ls}[1]
{\dimen0=\fontdimen6\the\font\lineskip=#1\dimen0
	\advance\lineskip.5\fontdimen5\the\font
	\advance\lineskip-\dimen0
	\lineskiplimit=0.9\lineskip
	\baselineskip=\lineskip
	\advance\baselineskip\dimen0
	\normallineskip\lineskip\normallineskiplimit\lineskiplimit
	\normalbaselineskip\baselineskip
	\ignorespaces}
\begin{document}
	
	\bibliographystyle{abbrv}
	
	\title{The $q$-ary Golay complementary arrays
		of size $\bm{2}^{(m)}$ are standard
	%	\thanks{This research is supported in part by NSFC under Grant 61671013 and U19B2021.}
	}

\author{Erzhong Xue, Zilong Wang\\
	\small  State Key Laboratory of Integrated Service Networks\\[-0.8ex]
	\small School of Cyber Engineering, Xidian University\\
	\small Xi'an, 710071, China\\
	\small\tt 2524384374@qq.com, zlwang@xidian.edu.cn\\
}
	
%\author{\IEEEauthorblockN{Erzhong Xue, \quad Zilong Wang}
%	\IEEEauthorblockA{\textit{State Key Laboratory of Integrated Service Networks} \\
%		\textit{School of Cyber Engineering, Xidian University}\\
%		Xi'an, China \\
%		2524384374@qq.com,\quad zlwang@xidian.edu.cn}
%	\and
%	\IEEEauthorblockN{Jinjin Chai}
%	\IEEEauthorblockA{
%		\textit{The Air Defence and Antimissile School} \\
%		\textit{Air Force Engineering University}\\
%		Xi'an, China \\
%		jj\_chai@163.com}
%}

	\maketitle

	\thispagestyle{plain} \setcounter{page}{1}

	\ls{1.5}

\begin{abstract}
To find the non-standard binary Golay complementary sequences (GCSs) of length $2^{m}$ or  theoretically prove  the nonexistence of them are still open.
Since it has been shown %by Fiedler et. al.
that all the standard $q$-ary (where $q$ is even) GCSs of length $2^m$ can be obtained by standard $q$-ary Golay complementary array pair (GAP) of dimension $m$ and size $2\times 2 \times \cdots \times 2$ (abbreviated to size  $\bm{2}^{(m)}$),
it's natural to ask whether all the $q$-ary GAP of size $\bm{2}^{(m)}$ are standard.
We give a positive answer to this question.

\end{abstract}
{\bf Keywords: } Golay complementary array, Golay complementary  sequence, aperiod correlation.

\section{Introduction} \label{section:intro}

%\subsection{Sequence}
The binary GCSs were first introduced by Golay \cite{Golay1951Static} in the context of infrared spectrometry, and were generalized to $q$-ary cases later on.
A $q$-ary sequence $\bm{f}$ of length $L$ is defined as %to be one-dimensional array $\bm{f}=(f(t))$,
\[\bm{f}=(f(0), f(1),\cdots, f(L-1)),\]
where each entry $f(t)$ belongs to $\mathbb{Z}_q$ for $0\leq{t}<L$.
The {\em aperiodic auto-correlation} of sequence $\bm{f}$ at shift $\tau$ $(-L< \tau < L)$ is defined by
\[C_{{f}}(\tau):=\sum_{t} \zeta^{{f}(t+\tau)-{f}(t)},%\quad\text{for integer}\;\tau,
\]
where $\zeta=e^{2\pi\sqrt{-1}/q}$ and $\zeta^{{f}(t+\tau)-{f}(t)} := 0$ if ${f}(t+\tau)$ or ${f}(t)$ is not defined.
%
%\begin{definition}
A pair of sequences $\{\bm{f_1},\bm{f_2}\}$ is called a {\em Golay complementary pair (GCP)} \cite{Golay1951Static} if
\[
C_{{f_1}}(\tau)+C_{{f_2}}(\tau)=0,\;\text{for all}\;\tau \neq 0.
\]
Each sequence in such a pair is called a {\em GCS}.% \cite{Golay1951Static}.

Repeated application of Turyn's construction \cite{Turyn1974}, beginning with binary GCPs of lengths $2$, $10$ and $26$ given in literature, can be used to construct GCPs for all lengths $2^{a}10^{b}26^{c}$, $a,b,c\geq0$, which are all the known lengths of binary GCPs up to now. In 1999, Davis and Jedwab \cite{Davis1999Peak} gave a explicit expression of $(q=2^h)$-ary GCPs of length $2^{m}$ based on Generalized Boolean functions. Paterson \cite{Paterson00} showed the same construction holds without modification for any even $q$, which is given bellow.

For any permutation $\pi$ of $\{1,2,\dots,m\}$, and any choice of constants $c'\in\mathbb{Z}_{q}$, $c_{k}\in\mathbb{Z}_{q}$ $(1\leq k\leq m)$, define the pair of generalized Boolean functions
\begin{equation}\label{eq: thm_fg}
	\left\{
	\begin{aligned}
		f(x_1,x_2,\cdots,x_m)&=\frac{q}{2}\sum_{k=1}^{m-1}x_{\pi(k)}x_{\pi(k+1)}+ \sum_{k=1}^{m}c_{k}x_{k}+c_{0},\\
		g(x_1,x_2,\cdots,x_m)&=f(x_1,x_2,\cdots,x_m)+\frac{q}{2}x_{\pi(1)}+c'.
%		, \quad\text{or}\quad\\
%		g(x_1,x_2,\cdots,x_m)&=f(x_1,x_2,\cdots,x_m)+\frac{q}{2}x_{\pi(m)}+c'.
	\end{aligned}
	\right.
\end{equation}
The sequence pais $(f(t),g(t))$ projected from $(f(x_1,x_2,\cdots,x_m),g(x_1,x_2,\cdots,x_m))$ by restricting $t=\sum_{k=1}^{m}2^{k-1}x_{k}$ form GCP of length $2^{m}$, and have been referred to as the {\em standard} since then.
The sequences in  standard GCPs are called  {\em standard GCSs}.

Numerical evidence \cite{Borwein2004A} suggests that there are no other binary GCSs of length $2^m$ when $m$ does not exceed 100. To find the non-standard binary GCSs of length $2^m$ ($m>100$) or to theoretically prove  the nonexistence of binary GCSs are still open.

In 2005, Li and Chu \cite{Li2005More} discovered $ 1024 $ non-standard quaternary GCSs of length $16$ by computer search. Fiedler and Jedwab \cite{Fiedler2006How} explained shortly afterwards by means of \lq \lq crossover\rq \rq \ quaternary standard GCPs of length $8$. Furthermore, a framework to construct GCSs presented in \cite{Fiedler2008A} simplified the previous approaches to construct GCPs from the viewpoint of \lq \lq array\rq \rq .

Golay (complementary) arrays are firstly studied by L{\"u}ke \cite{Luke1985Welti} and Dymond \cite{Dymond1992} as a generalisation of combinatorial object, and then shown to be one most powerful method to study GCSs \cite{Jedwab2007Golay}.
An $m$-dimensional $q$-ary array $\bm{f}$ of size $2 \times 2 \times \cdots \times 2$ (or abbreviated to size  $\bm{2}^{(m)}$) can be represented by a generalized Boolean function over $\Z_{q}$
%from $\F_{2}^{m}$ to $\Z_{q}$:
\[f(\bm{x})=f(x_1,x_2,\cdots,x_m),
\]
where for all $1\leq k\leq m$, $x_k \in \{0,1\}$.
The {\em aperiodic auto-correlation} of an array $\bm{f}$ at shift $\bm{\tau} = (\tau_1,\tau_2,\cdots,\tau_m)$, $(-1\leq\tau_i\leq 1)$, is defined by
\begin{equation*}\label{equation:eq1}
	C_{{f}}(\bm{\tau})=\sum_{\bm{x}\in \{0,1\}^m} \zeta^{f(\bm{x}+\bm{\tau})-f(\bm{x})},
\end{equation*}
where $\zeta^{f(\bm{x}+\bm{\tau})-f(\bm{x})}:=0$ if $f(\bm{x}+\bm{\tau})$ or $f(\bm{x})$ is not defined, and $\bm{x}+\bm{\tau}$ is the element-wise addition of integers.
%\begin{definition}\label{definition:d2}
A pair of arrays $(\bm{f}_1,\bm{f}_2)$ is called a {\em GAP} if
\begin{equation*}\label{equation:eq2}
	C_{{f}_1}(\bm{\tau})+C_{{f}_2}(\bm{\tau})=0\;\text{for all}\;\bm{\tau} \neq \bm{0}.
\end{equation*}
Each array in such a pair is called a {\em Golay complementary array (GCA)} \cite{Jedwab2007Golay}.

It has been shown in \cite{Fiedler2008Am} that  $(f(\bm{x}),g(\bm{x}))$ given in \eqref{eq: thm_fg} form GAPs of size $\bm{2}^{(m)}$, from which all the $q$-ary standard GCPs of length $2^m$ can be obtained.
We call the Golay array pairs shown in \eqref{eq: thm_fg} {\em standard}.
%The arrays in  standard Golay array  pairs are called  {\em standard GCAs (resp. sequences)}.
Similar to the case of GCPs,
it's natural to ask an inverse problem whether all the binary (or  $q$-ary) GAP of size $\bm{2}^{(m)}$ are standard.

To study this problem,
%It was shown in \cite{Davis1999Peak} that the Boolean functions associated with binary standard GCSs are bent functions for even $ m $.
 the present authors \cite{Chai2021DCCWalsh} exhibited the Walsh and nega-Walsh spectrum distribution  of the binary and quaternary GCAs in 2021. The GCAs  can only be constructed from (generalized) Boolean functions satisfying spectral values given in \cite{Chai2021DCCWalsh}.
For instance, an $m$-dimensional binary GCA must be bent for even $ m $ and near-bent for odd $ m $ with respect to the Walsh spectrum, and it must be negaplateaued, nega-bent or negalandscape with respect to the nega spectrum.
%\textcolor{blue}{An $ m $-dimensional quaternary GCA must be generalized semi-bent, generalized bent or landscape with respect to the Walsh spectrum, and it must be negaplateaued, generalized nega-bent or negalandscape with respect to the nega spectrum.}

In this paper, we further study this problem and give a positive answer to it, which is given by the following Theorem.
%It's natural to ask whether all the binary \textcolor{blue}{or even $q$-ary} GAP of dimension $m$ and size $2\times 2 \times \cdots \times 2$ are standard (i.e., of form \eqref{eq: thm_f}).
\begin{theorem}\label{thm: Type I}
The $q$-ary GAPs $(f(\bm{x}),g(\bm{x}))$ of size $\bm{2}^{(m)}$
%of dimension $m$ and size $2\times 2 \times \cdots \times 2$
must be standard, i.e., of form \eqref{eq: thm_fg}.
\end{theorem}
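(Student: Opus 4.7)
The plan is strong induction on the dimension $m$. In the base case $m = 1$, a GAP of size $\bm{2}^{(1)}$ is a pair of affine functions $(f(x_1), g(x_1))$ satisfying $\zeta^{f(1) - f(0)} + \zeta^{g(1) - g(0)} = 0$. This forces $g - f \equiv (q/2) x_1 + c' \pmod{q}$ for some $c' \in \mathbb{Z}_q$, which is exactly the form \eqref{eq: thm_fg} with $\pi(1) = 1$.

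For the inductive step, assume the theorem in dimension $m - 1$ and let $(f, g)$ be a GAP of size $\bm{2}^{(m)}$. The crucial intermediate claim is the existence of a coordinate $k \in \{1, \dots, m\}$ and a constant $c' \in \mathbb{Z}_q$ such that $g(\bm{x}) \equiv f(\bm{x}) + (q/2) x_k + c' \pmod{q}$. Granting this, let $f_\epsilon(\bm{x}^-)$ denote the restriction of $f$ to $x_k = \epsilon$ (for $\epsilon \in \{0,1\}$), so that $f(\bm{x}) = f_0 + x_k(f_1 - f_0)$, and observe $\zeta^{g(\bm{x}+\bm{\tau}) - g(\bm{x})} = \zeta^{(q/2)\tau_k} \zeta^{f(\bm{x}+\bm{\tau}) - f(\bm{x})}$. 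Hence the GAP equations at shifts $\bm{\tau}$ with $\tau_k \in \{\pm 1\}$ become tautologies, while those with $\tau_k = 0$ reduce to
\[
2\bigl(C_{f_0}(\bm{\tau}') + C_{f_1}(\bm{\tau}')\bigr) = 0, \qquad \bm{\tau}' \neq \bm{0}.
\]
Thus $(f_0, f_1)$ is itself a GAP of size $\bm{2}^{(m-1)}$, and by the inductive hypothesis has the standard form with respect to some permutation $\sigma$ of $\{1,\dots,m\} \setminus \{k\}$; in particular $f_1 - f_0 = (q/2) x_{\sigma(1)} + d'$ for a suitable $d'$. Reassembling $f = f_0 + x_k(f_1 - f_0)$ introduces the new quadratic term $(q/2) x_k x_{\sigma(1)}$, which extends the Hamiltonian path of $f_0$ by the vertex $k$. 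Combined with $g = f + (q/2) x_k + c'$, this yields the standard form of $(f, g)$ with $\pi(1) = k$ and $\pi(j+1) = \sigma(j)$ for $1 \leq j \leq m-1$.

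The principal difficulty is establishing the intermediate claim: that $g - f$ must equal $(q/2) x_k + c'$ for a single coordinate $k$. We plan to analyze the aperiodic correlation equations at structured shifts. At maximal-support shifts $\bm{\tau} \in \{-1, +1\}^m$, each equation $C_f(\bm{\tau}) + C_g(\bm{\tau}) = 0$ reduces to a two-term identity $\zeta^{a_{\bm{\tau}}} + \zeta^{b_{\bm{\tau}}} = 0$, giving $2^m$ linear congruences modulo $q$ on the algebraic-normal-form coefficients of $h := g - f$. Combining these corner identities with the correlation equations at smaller-support shifts, and with the Walsh and nega-Walsh spectral rigidity of GCAs from \cite{Chai2021DCCWalsh}, we aim to prove successively that the high-degree coefficients of $h$ vanish, that $h$ is affine with coefficients in $\{0, q/2\}$ modulo $q$, and finally that its linear part is carried by exactly one variable. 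The hardest step will be handling the interaction of sums of $q$-th roots of unity at intermediate-support shifts: for $q > 2$ the cancellation patterns are substantially richer than in the binary case, and ruling out nontrivial possibilities demands careful combinatorial accounting of $\zeta$-sums together with the spectral constraints on $f$ and $g$ individually.
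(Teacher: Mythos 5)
Your overall induction scheme is sound, and the reduction you describe \emph{assuming} the intermediate claim is correct: if $g(\bm{x})=f(\bm{x})+\frac{q}{2}x_k+c'$, then $C_g(\bm{\tau})=(-1)^{\tau_k}C_f(\bm{\tau})$, the shifts with $\tau_k=\pm1$ are automatic, the shifts with $\tau_k=0$ force $(f_0,f_1)$ to be a GAP of size $\bm{2}^{(m-1)}$, and the reassembly $f=f_0+x_k(f_1-f_0)$ correctly appends $k$ to the head of the Hamiltonian path of $f_0$. The base case is also fine.

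The genuine gap is the intermediate claim itself, which you defer to a ``plan'' rather than prove, and which carries essentially all of the difficulty of the theorem. The constraints you propose to extract are far weaker than what you need. At maximal-support shifts $\bm{\tau}\in\{-1,+1\}^m$ exactly one term survives in each correlation sum, and the $2^m$ corner identities collapse to the single family $h(\bar{\bm{x}})-h(\bm{x})\equiv q/2 \pmod q$ for $h:=g-f$, where $\bar{\bm{x}}$ is the antipode of $\bm{x}$. This is satisfied by many functions that are not of the form $\frac{q}{2}x_k+c'$ (for instance $h=\frac{q}{2}(x_1+x_2+x_3)+c'$, or various non-affine $h$), so the corner identities alone cannot isolate a single carrying variable. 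The intermediate-support shifts involve sums of several $q$-th roots of unity whose vanishing patterns for general even $q$ are exactly the combinatorial morass you acknowledge; no mechanism is given for resolving it, and the appeal to the Walsh/nega-Walsh rigidity of \cite{Chai2021DCCWalsh} is not developed into an argument. As written, the claim that $h$ is affine with linear part supported on one coordinate is asserted as a goal, not derived. Note also that the paper's proof avoids ever having to establish the form of $g-f$ in advance: it passes to generating functions, splits off one coordinate as $F=F_0+F_1z_{m+1}$, $G=G_0+G_1z_{m+1}$, and uses unique factorization in $\C[\bm{z}]$ (the multilinearity of generating functions of $\bm{2}^{(m)}$ arrays forces any factorization to separate variables) to write $F_0=A(\bm{z}_1)C(\bm{z}_2)$, $G_0=B(\bm{z}_1)C(\bm{z}_2)$ via the gcd, producing two GAPs of complementary dimensions to which the induction hypothesis applies; the relation $g=f+\frac{q}{2}x_{\pi(1)}+c'$ then falls out at the end. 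If you want to salvage your route, you would need a complete proof of the intermediate claim, and it is not clear that one exists that is simpler than the factorization argument.
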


\section{Proof of Theorem \ref{thm: Type I}}\label{section:type-II}

\subsection{Preliminary}
The {\em generating function} of a $q$-ary array $f(\bm{x})$ is given by
\begin{equation}\label{eq: F(z)}
	F(\bm{z})=\sum_{\bm{x} \in \{0,1\}^m}\zeta^{f(\bm{x})}z_1^{x_1}z_2^{x_2}\cdots z_m^{x_m},
\end{equation}
where $\bm{z}=(z_1,z_2,\cdots,z_m)$.
Conversely, $f(\bm{x})$ is called the $q$-ary array of the  generating function $F(\bm{z})$.
\begin{property}\label{property1}
The generating function of aperiodic auto-correlation of $C_{{f}}(\bm{\tau})$ is given by
\begin{equation}\label{array-correlation}
	F(\bm{z})\cdot \overline{F}(\bm{z}^{-1})=\sum_{\bm{\tau}}C_{f}(\bm{\tau})\cdot\bm{z}^{\bm{\tau}},
\end{equation}
where $\bm{z}^{\bm{\tau}}$ means $z_1^{\tau_1}z_2^{\tau_2}\cdots z_{m}^{\tau_{m}}$,
$\overline{F}(\bm{z})$ is the conjugate of $F(\bm{z})$ in all the coefficients, and $\bm{z}^{-1}$ means $(z_1^{-1},z_2^{-1},\cdots, z_{m}^{-1})$.
\end{property}
A GAP can be alternatively defined from the generating functions.
\begin{property}\label{lem: GAP}
	$f_1(\bm{x})$ and $f_2(\bm{x})$ forms a GAP if and only if their generating functions $F_1(\bm{z})$ and $F_2(\bm{z})$ satisfy
	\begin{equation}\label{equation:eq7}
		F_1(\bm{z})\cdot \overline{F}_1(\bm{z}^{-1})+F_2(\bm{z})\cdot \overline{F}_2(\bm{z}^{-1})= 2^{m+1}.
	\end{equation}
\end{property}

%\begin{definition}[Reverse of Array]\label{def: F*(z)}
Define  $f^{*}(\bm{x})$ to be $q$-ary array of size $\bm{2}^{(m)}$ given by
\begin{equation}
f^{*}(\bm{x})=f(\bar{x}_1,\bar{x}_2,\cdots,\bar{x}_m):\{0,1\}^{m}\to \mathbb{Z}_q,
\end{equation}
	where $\bar{x}_k=1-{x}_{k}\in \{0,1\}$, ($1\leq k\leq m$). $f^{*}(\bm{x})$ is called reverse array of $f(\bm{x})$.
%\end{definition}

For any $F(\bm{z})\in\C[\bm{z}]$ ($\bm{z}=(z_{1},z_{2},\dots,z_{m})$), multivariate polynomial ring over complex field, let  $d_{k}$ be the degree of $F(\bm{z})$ with respect to  $z_{k}$ for $1\leq{k}\leq{m}$.
Define polynomial 
\begin{equation}\label{eq: F*(z)}
	F^{*}(\bm{z})=\prod_{k=1}^{m}{z}_k^{d_{k}}\cdot{F}(\bm{z}^{-1}).
\end{equation}
For example,  $F(z_{1},z_{2})=c_{0}\cdot{z}_{1}^{2}z_{2}+c_{1}\cdot{z}_{1}z_{2}+c_{2}\cdot{z}_{1}+c_{3}\cdot{z}_{2}^{2}$, where $c_{0},c_{1},c_{2},c_{3}\in\C$, then $F^{*}(z_{1},z_{2})=c_{0}\cdot{z}_{2}+c_{1}\cdot{z}_{1}{z}_{2}+c_{2}\cdot{z}_{1}{z}_{2}^{2}+c_{3}\cdot{z}_{1}^{2}$.

\begin{property}\label{property: F*(z)}
Let $F(\bm{z})$ be the generating function of a $q$-ary array $f(\bm{x})$. Then $F^{*}(\bm{z})$ is the generating function of a $q$-ary array $f^{*}(\bm{x})$.
\end{property}

%\subsection{Some Lemmas}

Since $\mathbb{C}[\bm{z}]$ is a unique factorization domain \cite[Sec. 9.6]{Algebra},
any $F(\bm{z})\in\mathbb{C}[\bm{z}]$ can be factored into a product of irreducible polynomials in $\mathbb{C}[\bm{z}]$, and any two polynomials in $\mathbb{C}[\bm{z}]$ has a great common divisor.

Denote indeterminates $(\bm{z}_1,\bm{z}_2)$ as {\em a partition of} the indeterminates $\bm{z}$, where $\bm{z}_1\cup \bm{z}_2=\bm{z}$ and $\bm{z}_1\cap\bm{z}_2=\varnothing$. 
For a  polynomial $A(\bm{z}_1,\bm{z}_2)\in \mathbb{C}[\bm{z}_1,\bm{z}_2]$, if the degree of $A(\bm{z}_1,\bm{z}_2)$ with respect to any indeterminate in $\bm{z}_1$ is larger than 0, and the degree of $A(\bm{z}_1,\bm{z}_2)$ with respect to any indeterminate in $\bm{z}_2$ is  0, we say that $A(\bm{z}_1,\bm{z}_2)$ can be  {\em succinctly} denoted by $A(\bm{z}_1)$ in this paper. 
For example, if $A(z_0, z_1, z_2, z_3)=z_1z_3+z_1+1$, $A(z_0, z_1, z_2, z_3)$ can be succinctly denoted by $A(z_1, z_3)$.
Define $\text{dim}\;\bm{z}$ as the number of indeterminates in $\bm{z}$.

Let ${F}(\bm{z})$ be the generating function of a $q$-ary array ${f}(\bm{x})$ of size $\bm{2}^{(m)}$ ($m\geq2$).
Suppose that ${F}(\bm{z})$ has a nontrivial factorization in $\C[\bm{z}]$, say, ${F}(\bm{z})={A}'(\bm{z})\cdot{B}'(\bm{z})$, and ${A}'(\bm{z})$ and ${B}'(\bm{z})$ can be  succinctly denoted by ${A}'(\bm{z}_{1})$ and ${B}'(\bm{z}_{2})$, i.e., 
\begin{equation}
	{F}(\bm{z})={A}'(\bm{z}_{1})\cdot{B}'(\bm{z}_{2}).
\end{equation}
Since the product of constant terms of ${A}'(\bm{z})$ and ${B}'(\bm{z})$ is the constant term of ${F}(\bm{z})$, which belongs to $\{\zeta^c|c\in\Z_{q}\}$, 
we can always find $\alpha\in\C$ which satisfies that the constant terms of both $\alpha\cdot{A}'(\bm{z}_{2})$ and $\alpha^{-1}\cdot{B}'(\bm{z}_{1})$ belong to $\{\zeta^c|c\in\Z_{q}\}$.
Let ${A}(\bm{z}_{1})=\alpha\cdot{A}'(\bm{z}_{1})$ and ${B}(\bm{z}_{2})=\alpha^{-1}\cdot{B}'(\bm{z}_{2})$. We call
\begin{equation}\label{eq: F=AB}
	{F}(\bm{z})={A}(\bm{z}_{1})\cdot{B}(\bm{z}_{2}),
\end{equation}
a normalized factorization.

\begin{lemma}\label{lem: F=AB}
	Let 
 ${F}(\bm{z})$ be the generating function of a $q$-ary array ${f}(\bm{x})$ of size $\bm{2}^{(m)}$ ($m\geq2$). Suppose that  ${F}(\bm{z})$ has a nontrivial normalized factorization in $\C[\bm{z}]$: 
 ${F}(\bm{z})={A}(\bm{z}_{1})\cdot{B}(\bm{z}_{2})$, where $\text{dim}\;\bm{z}_{1}=m_{1}$ and $\text{dim}\;\bm{z}_{2}=m_{2}$.
We have:

1) $(\bm{z}_1,\bm{z}_2) $ is a partition of the indeterminates $\bm{z}$,
%$\bm{z}=(\bm{z}_{1},\bm{z}_{2})$, 
and $m=m_1+m_2$.

2) ${A}(\bm{z}_{1})$ and ${B}(\bm{z}_{2})$ are generating functions of $q$-ary arrays ${a}(\bm{x}_{1})$ and ${b}(\bm{x}_{2})$, which are of size $\bm{2}^{(m_1)}$ and $\bm{2}^{(m_2)}$ respectively;
	
3) The functions ${f}(\bm{x})$, ${a}(\bm{x}_{1})$ and ${b}(\bm{x}_{2})$ satisfy
	\begin{equation}
		{f}(\bm{x})={a}(\bm{x}_{1})+{b}(\bm{x}_{2}).\\%+{f}(\bm{x}=\bm{0}).
	\end{equation}
\end{lemma}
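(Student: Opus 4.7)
The plan is to exploit the multilinearity of $F(\bm{z})$: since
$F(\bm{z})=\sum_{\bm{x}\in\{0,1\}^{m}}\zeta^{f(\bm{x})}z_{1}^{x_{1}}\cdots z_{m}^{x_{m}}$
is a sum of exactly $2^{m}$ monomials, each with \emph{nonzero} coefficient $\zeta^{f(\bm{x})}$, the degree of $F$ with respect to every $z_{k}$ is exactly $1$. Since $\mathbb{C}[\bm{z}]$ is an integral domain, $\deg_{z_{k}}(PQ)=\deg_{z_{k}}(P)+\deg_{z_{k}}(Q)$ for any factorization. This one observation drives everything.

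For part (1), apply this identity to $F=A(\bm{z}_{1})B(\bm{z}_{2})$. If $z_{k}\in\bm{z}_{1}\cap\bm{z}_{2}$, then the succinctness convention forces $\deg_{z_{k}}(A)\geq1$ and $\deg_{z_{k}}(B)\geq1$, so $\deg_{z_{k}}(F)\geq 2$, contradicting multilinearity; hence $\bm{z}_{1}\cap\bm{z}_{2}=\varnothing$. If some $z_{k}\in\bm{z}$ lay outside $\bm{z}_{1}\cup\bm{z}_{2}$, then $\deg_{z_{k}}(A)=\deg_{z_{k}}(B)=0$, giving $\deg_{z_{k}}(F)=0$ and contradicting $\deg_{z_{k}}(F)=1$. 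Therefore $(\bm{z}_{1},\bm{z}_{2})$ partitions $\bm{z}$ and $m=m_{1}+m_{2}$. The same degree bookkeeping shows $\deg_{z_{k}}(A)=1$ for every $z_{k}\in\bm{z}_{1}$ and $\deg_{z_{k}}(B)=1$ for every $z_{k}\in\bm{z}_{2}$, so $A$ and $B$ are themselves multilinear on their respective variable sets.

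For (2) and (3), write
\[
A(\bm{z}_{1})=\sum_{\bm{x}_{1}\in\{0,1\}^{m_{1}}}\alpha_{\bm{x}_{1}}\bm{z}_{1}^{\bm{x}_{1}},\qquad
B(\bm{z}_{2})=\sum_{\bm{x}_{2}\in\{0,1\}^{m_{2}}}\beta_{\bm{x}_{2}}\bm{z}_{2}^{\bm{x}_{2}},
\]
expand the product, and match coefficients with the expression of $F$ as a sum over $\bm{x}\in\{0,1\}^{m}$. This yields $\alpha_{\bm{x}_{1}}\beta_{\bm{x}_{2}}=\zeta^{f(\bm{x}_{1},\bm{x}_{2})}$ for every pair $(\bm{x}_{1},\bm{x}_{2})$. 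Now invoke the normalization hypothesis: it supplies $\alpha_{\bm{0}}=\zeta^{c_{1}}$ and $\beta_{\bm{0}}=\zeta^{c_{2}}$ with $c_{1},c_{2}\in\Z_{q}$. Setting $\bm{x}_{2}=\bm{0}$ (respectively $\bm{x}_{1}=\bm{0}$) in the identity above then forces each $\alpha_{\bm{x}_{1}}$ and each $\beta_{\bm{x}_{2}}$ to be a $q$-th root of unity, so one may define $a(\bm{x}_{1}),b(\bm{x}_{2})\in\Z_{q}$ by $\alpha_{\bm{x}_{1}}=\zeta^{a(\bm{x}_{1})}$ and $\beta_{\bm{x}_{2}}=\zeta^{b(\bm{x}_{2})}$. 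This exhibits $A$ and $B$ as generating functions of $q$-ary arrays of the claimed sizes, and the identity $\zeta^{f(\bm{x}_{1},\bm{x}_{2})}=\zeta^{a(\bm{x}_{1})+b(\bm{x}_{2})}$ delivers (3).

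No step seems genuinely hard; the only subtlety is in (1), where one must invoke both the nonvanishing of every coefficient $\zeta^{f(\bm{x})}$ (to conclude $\deg_{z_{k}}(F)=1$ rather than merely $\deg_{z_{k}}(F)\leq 1$) and the additivity of single-variable degrees in an integral domain. Without the nonvanishing, the coverage $\bm{z}_{1}\cup\bm{z}_{2}=\bm{z}$ could fail, and without the integral-domain additivity the disjointness argument would collapse; with both in hand, (2) and (3) are a direct coefficient comparison leveraging the normalization.
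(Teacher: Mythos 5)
Your proof is correct and follows essentially the same route as the paper: degree-counting in the integral domain $\C[\bm{z}]$ for the partition claim, then coefficient comparison of the product expansion combined with the normalization of the constant terms to identify each coefficient as a power of $\zeta$ and deduce ${f}(\bm{x})={a}(\bm{x}_{1})+{b}(\bm{x}_{2})$. The only difference is cosmetic: you spell out the coverage $\bm{z}_{1}\cup\bm{z}_{2}=\bm{z}$ and the multilinearity of $A$ and $B$, which the paper treats as immediate.
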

\begin{proof}
1) It's obviously $\bm{z}=(\bm{z}_{1}\cup\bm{z}_{2})$.
Suppose that there exists an indeterminate $z\in(\bm{z}_{1}\cap\bm{z}_{2})$, i.e.,
the degrees  of ${A}(\bm{z}_{1})$ and ${B}(\bm{z}_{2})$ with respect to $z$ are both at least $1$.
Then the degree of ${F}(\bm{z})={A}(\bm{z}_{1})\cdot{B}(\bm{z}_{2})$ with respect to $z$ is at least $2$, which contradicts with the fact that
the degree of ${F}(\bm{z})$ with respect to $z$ is $1$.
Thus $\bm{z}_{1}\cap\bm{z}_{2}=\varnothing$.
Therefor, $\bm{z}_1$ and $\bm{z}_2$ is a partition of $\bm{z}$.
It is natural that the number of indeterminates satisfy $m=m_1+m_2$.

2) Let the polynomial expansions of  $F(\bm{z})$, $A(\bm{z}_{1})$ and $B(\bm{z}_{2})$ be given by
\begin{equation}
	\left\{
	\begin{aligned}
	F(\bm{z})&=\sum_{\bm{x} \in \{0,1\}^m}\zeta^{f(\bm{x})}\cdot\bm{z}^{\bm{x}},\\
	A(\bm{z}_{1})&=\sum_{\bm{x}_{1}\in \{0,1\}^{m_{1}}}{\mathcal{A}(\bm{x}_{1})}\cdot\bm{z}_{1}^{\bm{x}_{1}},\\
	{B}(\bm{z}_{2})&=\sum_{\bm{x}_{2}\in \{0,1\}^{m_2}}{\mathcal{B}(\bm{x}_{2})}\cdot\bm{z}_{2}^{\bm{x}_{2}},
	\end{aligned}
	\right.
\end{equation}
%\begin{equation}
%F(\bm{z})=\sum_{\bm{x} \in \{0,1\}^m}\zeta^{f(\bm{x})}\bm{z}^{\bm{x}},\quad
%A(\bm{z}_{1})=\sum_{\bm{x}_{1}\in \{0,1\}^{m_{1}}}{\mathcal{A}(\bm{x}_{1})}\cdot\bm{z}_{1}^{\bm{x}_{1}},\quad%\text{and}\qquad
%{B}(\bm{z}_{2})=\sum_{\bm{x}_{2}\in \{0,1\}^{m_2}}{\mathcal{B}(\bm{x}_{2})}\cdot\bm{z}_{2}^{\bm{x}_{2}},
%\end{equation}
where $\mathcal{A}(\bm{x}_{1}),\mathcal{B}(\bm{x}_{2})\in\C$. ($ \bm{x}_{1}\in \{0,1\}^{m_{1}}$, $ \bm{x}_{2}\in \{0,1\}^{m_{2}}$, $\bm{x}_{1}$ and $\bm{x}_{2}$ are a partition of $\bm{x}$.)
Comparing the coefficients of each term in both sides of ${F}(\bm{z})={A}(\bm{z}_{1})\cdot{B}(\bm{z}_{2})$, 
we have 
\begin{equation}
\zeta^{f(\bm{x})}=\mathcal{A}(\bm{x}_{1})\cdot\mathcal{B}(\bm{x}_{2}).%  \ \mbox{for}\  \bm{x}=(\bm{x}_1, \bm{x}_2).
\end{equation}
Denote the all ``$0$'' vectors by $\bm{0}\in\{0,1\}^{m}$, $\bm{0}_{1}\in\{0,1\}^{m_1}$ and $\bm{0}_{2}\in\{0,1\}^{m_2}$.
Since $\mathcal{A}(\bm{0}_{1})$ and $\mathcal{B}(\bm{0}_{2})$ are constant terms in $A(\bm{z}_{1})$ and ${B}(\bm{z}_{2})$, which belong to $\{\zeta^c|c\in\Z_{q}\}$,
%Without loss of generality, let 
%%$A(\bm{z}_{1})$ to be monic, i.e., 
%$\mathcal{A}(\bm{0}_{1})=\zeta^{0}$, then $\mathcal{B}(\bm{0}_{2})=\zeta^{f(\bm{0})}$.	
let  $\mathcal{A}(\bm{0}_{1})=\zeta^{a(\bm{0}_{1})}$ and $\mathcal{B}(\bm{0}_{2})=\zeta^{b(\bm{0}_{2})}$, where $a(\bm{0}_{1}),b(\bm{0}_{2})\in\Z_{q}$.  
Notice that $\zeta^{f(\bm{0})}=\mathcal{A}(\bm{0}_{1})\cdot\mathcal{B}(\bm{0}_{2})$, we have  $a(\bm{0}_{1})+b(\bm{0}_{2})=f(\bm{0})$.
Thus $\mathcal{A}(\bm{x}_{1})$ %and $\mathcal{B}(\bm{x}_{2})$
	can be determined by $\zeta^{{f}(\bm{x}_{1},\bm{0}_{2})}=\mathcal{A}(\bm{x}_{1})\cdot\mathcal{B}(\bm{0}_{2})=\mathcal{A}(\bm{x}_{1})\cdot\zeta^{b(\bm{0}_{2})}$, i.e.,
\begin{equation}\label{eq: a(x)}
		\mathcal{A}(\bm{x}_{1})=\zeta^{a(\bm{x}_{1})},
	\quad\text{where}
	\quad
	a(\bm{x}_{1})={f}(\bm{x}_{1},\bm{0}_{2})-{b}(\bm{0}_{2})\in\Z_{q}.
\end{equation}
	Similarly,
\begin{equation}\label{eq: b(x)}
\mathcal{B}(\bm{x}_{2})=\zeta^{b(\bm{x}_{2})},
\quad\text{where}
\quad
b(\bm{x}_{2})={f}(\bm{0}_{1},\bm{x}_{2})-{a}(\bm{0}_{1})\in\Z_{q}.
\end{equation}
	So that ${A}(\bm{z}_{1})$ and ${B}(\bm{z}_{2})$ are generating functions of $q$-ary arrays ${a}(\bm{x}_{1})$ and ${b}(\bm{x}_{2})$ (which is given by \eqref{eq: a(x)} and \eqref{eq: b(x)}) respectively.
%\\\\
%\noindent Step 3: Prove ${f}(\bm{x})={a}(\bm{x}_{1})+{b}(\bm{x}_{2})$.	

3) Since
${F}(\bm{z})=\sum_{\bm{x}}\zeta^{{f}(\bm{x})}\cdot\bm{z}^{\bm{x}}$, ${A}(\bm{z}_{1})=\sum_{\bm{x}_{1}}\zeta^{{a}(\bm{x}_{1})}\cdot\bm{z}_{1}^{\bm{x}_{1}}$ and ${B}(\bm{z}_{2})=\sum_{\bm{x}_{2}}\zeta^{{b}(\bm{x}_{2})}\cdot\bm{z}_{2}^{\bm{x}_{2}}$,
%\begin{equation}
%{F}(\bm{z})=\sum_{\bm{x}}\zeta^{{f}(\bm{x})}\cdot\bm{z}^{\bm{x}},\quad
%{A}(\bm{z}_{1})=\sum_{\bm{x}_{1}}\zeta^{{a}(\bm{x}_{1})}\cdot\bm{z}_{1}^{\bm{x}_{1}},\quad
%{B}(\bm{z}_{2})=\sum_{\bm{x}_{2}}\zeta^{{b}(\bm{x}_{2})}\cdot\bm{z}_{2}^{\bm{x}_{2}},
%\end{equation}
we have
\begin{equation}
\sum_{\bm{x}}\zeta^{{f}(\bm{x})}\cdot\bm{z}^{\bm{x}}
=\sum_{\bm{x}_{1}}\sum_{\bm{x}_{2}}\zeta^{{a}(\bm{x}_{1})+{b}(\bm{x}_{2})}\cdot\bm{z}_{1}^{\bm{x}_{1}}\cdot\bm{z}_{2}^{\bm{x}_{2}}.
\end{equation}
Comparing the coefficients of $\bm{z}^{\bm{x}}
=\bm{z}_{1}^{\bm{x}_{1}}\cdot\bm{z}_{2}^{\bm{x}_{2}}$ in both sides,
we obtain ${f}(\bm{x})={a}(\bm{x}_{1})+{b}(\bm{x}_{2})$.
\hfill\ensuremath{\square}
\end{proof}

\begin{lemma}\label{lem: (A,B)=C}
For $F(\bm{z})\in\C[\bm{z}]$,
if  $F(\bm{z})$ has factorization ${F}(\bm{z})={A}(\bm{z})\cdot{B}(\bm{z})$, then we have ${F}^{*}(\bm{z})={A}^{*}(\bm{z})\cdot{B}^{*}(\bm{z})$.
\end{lemma}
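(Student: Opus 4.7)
The plan is to unwind the definition of $F^{*}$ in \eqref{eq: F*(z)} and reduce the claim to the single statement that, for each variable $z_k$, the degree of a product equals the sum of the degrees. Let $d_k$, $a_k$, $b_k$ denote the degree in $z_k$ of $F(\bm{z})$, $A(\bm{z})$, $B(\bm{z})$ respectively. Since $\C[\bm{z}]$ is an integral domain (viewed, for the purpose of this computation, as the polynomial ring $\C[z_1,\dots,\widehat{z_k},\dots,z_m][z_k]$ with the remaining indeterminates adjoined to the coefficient ring), the leading coefficient of the product in $z_k$ is the product of the leading coefficients of $A$ and $B$ in $z_k$ and is therefore nonzero. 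This yields the key identity $d_k=a_k+b_k$ for every $k$.

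With this in hand, the computation is essentially mechanical. First I would write
\begin{equation*}
F^{*}(\bm{z})=\prod_{k=1}^{m}z_k^{d_k}\cdot F(\bm{z}^{-1})=\prod_{k=1}^{m}z_k^{a_k+b_k}\cdot A(\bm{z}^{-1})\cdot B(\bm{z}^{-1}),
\end{equation*}
using $F(\bm{z}^{-1})=A(\bm{z}^{-1})\cdot B(\bm{z}^{-1})$, which follows from the factorization $F=AB$ by substituting $z_k\mapsto z_k^{-1}$. Splitting the monomial factor as $\prod_{k}z_k^{a_k+b_k}=\prod_{k}z_k^{a_k}\cdot\prod_{k}z_k^{b_k}$ and regrouping gives
\begin{equation*}
F^{*}(\bm{z})=\Bigl(\prod_{k=1}^{m}z_k^{a_k}\cdot A(\bm{z}^{-1})\Bigr)\cdot\Bigl(\prod_{k=1}^{m}z_k^{b_k}\cdot B(\bm{z}^{-1})\Bigr)=A^{*}(\bm{z})\cdot B^{*}(\bm{z}),
\end{equation*}
which is the desired identity.

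There is no real obstacle here; the only subtle point is to justify $d_k=a_k+b_k$ cleanly, since in principle one might worry about cancellation of leading terms in a multivariate setting. The clean way is the integral-domain argument above (equivalently, one may pick a generic specialization of the other variables so that both leading coefficients of $A$ and $B$ in $z_k$ stay nonzero). Once this is in place the rest is a one-line rearrangement.
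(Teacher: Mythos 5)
Your proof is correct and follows essentially the same route as the paper: substitute $z_k\mapsto z_k^{-1}$ in $F=AB$, multiply by $\prod_k z_k^{d_k}$, and split the monomial using $d_k=a_k+b_k$. The only difference is that you carefully justify $d_k=a_k+b_k$ via the integral-domain argument, whereas the paper simply calls it obvious; your added care is harmless and arguably an improvement.
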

\begin{proof}
Let the degrees of $F(\bm{z})$, $A(\bm{z})$ and $B(\bm{z})$ with respect to  $z_{k}$  are 
 $d_{k}$, $a_{k}$ and $b_{k}$ respectively.
It's obviously that $a_{k}+b_{k}=d_{k}$.
Substituting  ${z}_{k}$ by ${z}_{k}^{-1}$ in ${F}(\bm{z})={A}(\bm{z})\cdot{B}(\bm{z})$, and multiply both sides by $ \prod_{k=1}^{m}{z}_k^{d_{k}}$, we obtain
\begin{equation}
\prod_{k=1}^{m}{z}_k^{d_{k}}\cdot{F}(\bm{z}^{-1})
=\prod_{k=1}^{m}{z}_k^{a_{k}}\cdot{A}(\bm{z}^{-1})\cdot
\prod_{k=1}^{m}{z}_k^{b_{k}}\cdot{B}(\bm{z}^{-1}),
\end{equation}
which complete the proof.
\hfill\ensuremath{\square}
\end{proof}

\subsection{The Proof}\label{subsec: proof}
We shall give the proof of Theorem \ref{thm: Type I} by applying mathematical induction.

\noindent
Basic Step:  It is know that Theorem \ref{thm: Type I} holds for arrays of size  $\bm{2}^{(1)}$.

\noindent
Inductive Step: Now we assume Theorem \ref{thm: Type I} holds for arrays of size  $\bm{2}^{(n)}$, where $n\leq{m}$.
From the assumption we shall deduce Theorem \ref{thm: Type I} holds for arrays of size  $\bm{2}^{(m+1)}$.

Suppose that $f(\bm{x},{x}_{m+1})$ and $g(\bm{x},{x}_{m+1})$ form a GAP of array size  $\bm{2}^{(m+1)}$, where $\bm{x}=(x_1,x_2,\cdots,x_m)\in\F_{2}^{m}$. Let ${f}_{0}(\bm{x})={f}(\bm{x},0)$, ${f}_{1}(\bm{x})={f}(\bm{x},1)$, ${g}_{0}(\bm{x})={g}(\bm{x},0)$ and ${g}_{1}(\bm{x})={g}(\bm{x},1)$. 
Then we have
\begin{numcases}{}
{f}(\bm{x},{x}_{m+1})={f}_{0}(\bm{x})(1-{x}_{m+1})+{f}_{1}(\bm{x})\cdot{x}_{m+1},\label{eq: f=f0+f1}\\
{g}(\bm{x},{x}_{m+1})={g}_{0}(\bm{x})(1-{x}_{m+1})+{g}_{1}(\bm{x})\cdot{x}_{m+1}.\label{eq: g=g0+g1}
\end{numcases}

Let  $F(\bm{z},{z}_{m+1})$ and $G(\bm{z},{z}_{m+1})$ be the generating functions of $f(\bm{x},{x}_{m+1})$ and $g(\bm{x},{x}_{m+1})$ respectively, where  $\bm{z}=(z_1,z_2,\cdots,z_m)$. According to Property \ref{lem: GAP}, we have
\begin{equation}\label{eq: F^2+G^2}
	F(\bm{z},{z}_{m+1})\cdot\overline{F}(\bm{z}^{-1},{z}_{m+1}^{-1})+G(\bm{z},{z}_{m+1})\cdot\overline{G}(\bm{z}^{-1},{z}_{m+1}^{-1})=2^{m+2}.
\end{equation}

	$F(\bm{z},{z}_{m+1})$ and 	$G(\bm{z},{z}_{m+1})$ can be expressed by 
\begin{numcases}{}
F(\bm{z},{z}_{m+1})=F_{0}(\bm{z})+F_{1}(\bm{z})\cdot{z}_{m+1},\label{eq: F(z,z)}\\
G(\bm{z},{z}_{m+1})=G_{0}(\bm{z})+G_{1}(\bm{z})\cdot{z}_{m+1},\label{eq: G(z,z)}
\end{numcases}
where $F_{0}(\bm{z})$, $F_{1}(\bm{z})$, $G_{0}(\bm{z})$ and $G_{1}(\bm{z})$ are  generating functions of $f_{0}(\bm{x})$, $f_{1}(\bm{x})$, $g_{0}(\bm{x})$ and $g_{1}(\bm{x})$ respectively.

${F}(\bm{z},{z}_{m+1})$ given in \eqref{eq: F(z,z)} times its conjugate leads to
\begin{equation}\label{eq: F^2}
	\begin{split}
		{F}(\bm{z},{z}_{m+1})\cdot\overline{F}(\bm{z}^{-1},{z}_{m+1}^{-1})&={F}_{0}(\bm{z})\cdot\overline{F}_{0}(\bm{z}^{-1})+{F}_{1}(\bm{z})\cdot\overline{F}_{1}(\bm{z}^{-1})\\
		&+{F}_{0}(\bm{z})\cdot\overline{F}_{1}(\bm{z}^{-1})\cdot{z}_{m+1}^{-1}
		+\overline{F}_{0}(\bm{z}^{-1})\cdot{F}_{1}(\bm{z})\cdot{z}_{m+1}.
	\end{split}
\end{equation}
${G}(\bm{z},{z}_{m+1})$ given in \eqref{eq: G(z,z)} times its conjugate leads to
\begin{equation}\label{eq: G^2}
	\begin{split}
		{G}(\bm{z},{z}_{m+1})\cdot\overline{G}(\bm{z}^{-1},{z}_{m+1}^{-1})&={G}_{0}(\bm{z})\cdot\overline{G}_{0}(\bm{z}^{-1})+{G}_{1}(\bm{z})\cdot\overline{G}_{1}(\bm{z}^{-1})\\
		&+{G}_{0}(\bm{z})\cdot\overline{G}_{1}(\bm{z}^{-1})\cdot{z}_{m+1}^{-1}
		+\overline{G}_{0}(\bm{z}^{-1})\cdot{G}_{1}(\bm{z})\cdot{z}_{m+1}.
	\end{split}
\end{equation}
Compare the coefficients of $1$, ${z}_{m+1}^{-1}$ and ${z}_{m+1}$ respectively between \eqref{eq: F^2}+\eqref{eq: G^2} and \eqref{eq: F^2+G^2},  we obtain
\begin{numcases}{}
{F}_{0}(\bm{z})\cdot\overline{F}_{0}(\bm{z}^{-1})+{F}_{1}(\bm{z})\cdot\overline{F}_{1}(\bm{z}^{-1})
+{G}_{0}(\bm{z})\cdot\overline{G}_{0}(\bm{z}^{-1})+{G}_{1}(\bm{z})\cdot\overline{G}_{1}(\bm{z}^{-1})=2^{m+1},\label{eq: A0+A1+B0+B1}\\
{F}_{0}(\bm{z})\cdot\overline{F}_{1}(\bm{z}^{-1})+{G}_{0}(\bm{z})\cdot\overline{G}_{1}(\bm{z}^{-1})=0,\label{eq: AA+BB=0}\\
\overline{F}_{0}(\bm{z}^{-1})\cdot{F}_{1}(\bm{z})+\overline{G}_{0}(\bm{z}^{-1})\cdot{G}_{1}(\bm{z})=0.\label{eq: FF+GG=0}
\end{numcases}

Let ${C}(\bm{z})=\left({F}_{0}(\bm{z}),{G}_{0}(\bm{z})\right)$  be the great common divisor of ${F}_{0}(\bm{z})$ and ${G}_{0}(\bm{z})$, and ${C}(\bm{z})$ be succinctly denoted by ${C}(\bm{z}_{2})$. According to Lemma \ref{lem: F=AB}, we have
\begin{numcases}{}
{F}_{0}(\bm{z})={A}(\bm{z}_{1})\cdot{C}(\bm{z}_{2}),\label{factor-1}\\
{G}_{0}(\bm{z})={B}(\bm{z}_{1})\cdot{C}(\bm{z}_{2}), \label{factor-2}
\end{numcases}
  where  $\bm{z}_1$ and $\bm{z}_2$ is a partition of the indeterminates $\bm{z}$, and ${A}(\bm{z}_{1})$, ${B}(\bm{z}_{1})$, and ${C}(\bm{z}_{2})$ are generating functions of $q$-ary array ${a}(\bm{x}_{1})$, ${b}(\bm{x}_{1})$, ${c}(\bm{x}_{2})$ respectively. 
 Without loss of generality, suppose that $\bm{z}_{1}=(z_{1},z_{2},\cdots,z_{m_1})$ $\bm{z}_{2}=(z_{m_1+1},z_{m_1+2},\cdots,z_m)$, $\bm{x}_{1}=(x_{1},x_{2},\cdots,x_{m_1})$ and $\bm{x}_{2}=(x_{m_1+1},x_{m_1+2},\cdots,x_m)$. Note that if $m_{1}=0$, we have $\bm{z}_{1}=\varnothing$,  ${A}(\bm{z}_{1})$ and ${B}(\bm{z}_{1})$ are constants in $\{\zeta^c|c\in\Z_{q}\}$, and ${a}(\bm{x}_{1})$ and ${b}(\bm{x}_{1})$ are constants over $\Z_{q}$; if $m_{1}=m$, then $\bm{z}_{2}=\varnothing$,  ${C}(\bm{z}_{2})$ is a constant in $\{\zeta^c|c\in\Z_{q}\}$, and ${c}(\bm{x}_{2})$ is a constant over $\Z_{q}$.

According to \eqref{eq: FF+GG=0}, \eqref{factor-1}, \eqref{factor-2} and  \eqref{eq: F*(z)}, we have
\begin{equation}
	\frac{{G}_{1}(\bm{z})}{{F}_{1}(\bm{z})}
	=-\frac{\overline{F}_{0}(\bm{z}^{-1})}{\overline{G}_{0}(\bm{z}^{-1})}
	=-\frac{\overline{A}(\bm{z}_{1}^{-1})}{\overline{B}(\bm{z}_{1}^{-1})}
	=-\frac{\overline{A}^{*}(\bm{z}_{1})}{\overline{B}^{*}(\bm{z}_{1})}.
\end{equation}
Since $\left({A}(\bm{z}_{1}),{B}(\bm{z}_{1})\right)=1$, we have 
 $(\overline{A}^{*}(\bm{z}_{1}),\overline{B}^{*}(\bm{z}_{1}))=1$ by Lemma \ref{lem: (A,B)=C}.
Thus there exists
\begin{equation}\label{eq: D(z)}
{D}(\bm{z}_{2})=\frac{{F}_{1}(\bm{z})}{\overline{B}^{*}(\bm{z}_{1})}
=-\frac{{G}_{1}(\bm{z})}{\overline{A}^{*}(\bm{z}_{1})}
=\left({F}_{1}(\bm{z}),{G}_{1}(\bm{z})\right),
\end{equation}
which is the generating function of a $q$-ary array ${d}(\bm{x}_{2})$.

\begin{proposition}\label{proposition1}
$({a}(\bm{x}_{1}),{b}(\bm{x}_{1}))$ and $({c}(\bm{x}_{2}),{d}(\bm{x}_{2}))$  given above are both Golay array pairs.
\end{proposition}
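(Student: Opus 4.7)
The plan is to substitute the factorizations \eqref{factor-1}, \eqref{factor-2}, and the expressions $F_{1}(\bm{z})=D(\bm{z}_{2})\cdot\overline{B}^{*}(\bm{z}_{1})$ and $G_{1}(\bm{z})=-D(\bm{z}_{2})\cdot\overline{A}^{*}(\bm{z}_{1})$ from \eqref{eq: D(z)} into the identity \eqref{eq: A0+A1+B0+B1}, simplify, and then exploit the disjointness of $\bm{z}_{1}$ and $\bm{z}_{2}$.

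The essential technical step is to rewrite $F_{1}\overline{F}_{1}(\bm{z}^{-1})$ and $G_{1}\overline{G}_{1}(\bm{z}^{-1})$ without the $*$-operation. From \eqref{eq: F*(z)} one verifies the identity $\overline{B}^{*}(\bm{z}_{1})\cdot B^{*}(\bm{z}_{1}^{-1})=B(\bm{z}_{1})\cdot\overline{B}(\bm{z}_{1}^{-1})$, because the monomial prefactors $\prod z_{k}^{b_{k}}$ and $\prod z_{k}^{-b_{k}}$ cancel; the analogue for $A$ is identical. Hence $F_{1}\overline{F}_{1}(\bm{z}^{-1})=B(\bm{z}_{1})\overline{B}(\bm{z}_{1}^{-1})\cdot D(\bm{z}_{2})\overline{D}(\bm{z}_{2}^{-1})$ and $G_{1}\overline{G}_{1}(\bm{z}^{-1})=A(\bm{z}_{1})\overline{A}(\bm{z}_{1}^{-1})\cdot D(\bm{z}_{2})\overline{D}(\bm{z}_{2}^{-1})$, while $F_{0}\overline{F}_{0}(\bm{z}^{-1})+G_{0}\overline{G}_{0}(\bm{z}^{-1})=[A(\bm{z}_{1})\overline{A}(\bm{z}_{1}^{-1})+B(\bm{z}_{1})\overline{B}(\bm{z}_{1}^{-1})]\cdot C(\bm{z}_{2})\overline{C}(\bm{z}_{2}^{-1})$. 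Combining these turns \eqref{eq: A0+A1+B0+B1} into
\begin{equation*}
\bigl[A(\bm{z}_{1})\overline{A}(\bm{z}_{1}^{-1})+B(\bm{z}_{1})\overline{B}(\bm{z}_{1}^{-1})\bigr]\cdot\bigl[C(\bm{z}_{2})\overline{C}(\bm{z}_{2}^{-1})+D(\bm{z}_{2})\overline{D}(\bm{z}_{2}^{-1})\bigr]=2^{m+2}.
\end{equation*}

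Since $\bm{z}_{1}$ and $\bm{z}_{2}$ are disjoint sets of indeterminates, a product of a Laurent polynomial purely in $\bm{z}_{1}$ with one purely in $\bm{z}_{2}$ can equal a nonzero constant only if each factor is itself constant. Hence both bracketed expressions are constants. By Property \ref{property1}, the first bracket equals $\sum_{\bm{\tau}_{1}}[C_{a}(\bm{\tau}_{1})+C_{b}(\bm{\tau}_{1})]\,\bm{z}_{1}^{\bm{\tau}_{1}}$, and its being a constant forces $C_{a}(\bm{\tau}_{1})+C_{b}(\bm{\tau}_{1})=0$ for every $\bm{\tau}_{1}\neq\bm{0}$, which is precisely the defining condition for $(a(\bm{x}_{1}),b(\bm{x}_{1}))$ to be a GAP; the same argument applied to the second bracket shows that $(c(\bm{x}_{2}),d(\bm{x}_{2}))$ is a GAP.

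The main bookkeeping obstacle is the careful handling of the $*$-operation, specifically verifying $\overline{B}^{*}(\bm{z}_{1})B^{*}(\bm{z}_{1}^{-1})=\overline{B}(\bm{z}_{1}^{-1})B(\bm{z}_{1})$ by expanding the definition \eqref{eq: F*(z)}; once this is in hand the disjoint-variable argument is routine. Edge cases such as $m_{1}=0$ or $m_{1}=m$ present no new difficulty: in those cases one of $\bm{z}_{1},\bm{z}_{2}$ is empty and the corresponding GAP condition becomes vacuous, while the product identity still delivers the other half of the proposition.
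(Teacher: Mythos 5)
Your proof is correct and follows essentially the same route as the paper: substitute the factorizations into \eqref{eq: A0+A1+B0+B1} via the identity $\overline{B}^{*}(\bm{z}_{1})\cdot B^{*}(\bm{z}_{1}^{-1})=B(\bm{z}_{1})\cdot\overline{B}(\bm{z}_{1}^{-1})$ (and its analogue for $A$), factor the result as a product over the disjoint variable sets $\bm{z}_{1}$ and $\bm{z}_{2}$, and conclude from Property \ref{property1} that the off-zero correlation sums vanish. A minor remark in your favour: your constant $2^{m+2}$ is the correct one, since the constant term of the left-hand side of \eqref{eq: A0+A1+B0+B1} is $4\cdot 2^{m}$ and $2^{m_{1}+1}\cdot 2^{m-m_{1}+1}=2^{m+2}$; the $2^{m+1}$ appearing in the paper's \eqref{eq: A0+A1+B0+B1} and \eqref{eq: (A+B)(C+D)=2^m} is a typo that does not affect the argument.
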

\begin{proof}
According to \eqref{eq: D(z)}, we have ${G}_{1}(\bm{z})=-\overline{A}^{*}(\bm{z}_{1})\cdot{D}(\bm{z}_{2})$.
According to \eqref{eq: F*(z)}, we have 
$\overline{A}^{*}(\bm{z}_{1})=\prod_{k=1}^{m_1}{z}_k\cdot\overline{A}(\bm{z}_{1}^{-1})$ and ${A}^{*}(\bm{z}_{1}^{-1})=\prod_{k=1}^{m_1}{z}_k^{-1}\cdot{A}(\bm{z}_{1})$.
Then we obtain 
\begin{equation}\label{eq: G1G1}
	\begin{split}
		{G}_{1}(\bm{z})\cdot\overline{G}_{1}(\bm{z}^{-1})
		&=\overline{A}^{*}(\bm{z}_{1})\cdot{D}(\bm{z}_{2})\cdot{A}^{*}(\bm{z}_{1}^{-1})\cdot\overline{D}(\bm{z}_{2}^{-1})\\
		&={A}(\bm{z}_{1})\cdot\overline{A}(\bm{z}_{1}^{-1})
		\cdot{D}(\bm{z}_{2})\cdot\overline{D}(\bm{z}_{2}^{-1}).
	\end{split}
\end{equation}
Similarly, we have
\begin{equation}\label{eq: F1F1}
	\begin{split}
		{F}_{1}(\bm{z})\cdot\overline{F}_{1}(\bm{z}^{-1})
%		&=\overline{B}^{*}(\bm{z}_{1})\cdot{D}(\bm{z}_{2})\cdot{B}^{*}(\bm{z}_{1}^{-1})\cdot\overline{D}(\bm{z}_{2}^{-1})\\
		&={B}(\bm{z}_{1})\cdot\overline{B}(\bm{z}_{1}^{-1})
		\cdot{D}(\bm{z}_{2})\cdot\overline{D}(\bm{z}_{2}^{-1}).
	\end{split}
\end{equation}
According to \eqref{factor-1} and \eqref{factor-2}, we obtain 
\begin{equation}\label{eq: F0F0}
{F}_{0}(\bm{z})\cdot\overline{F}_{0}(\bm{z}^{-1})
={A}(\bm{z}_{1})\cdot\overline{A}(\bm{z}_{1}^{-1})
\cdot{C}(\bm{z}_{2})\cdot\overline{C}(\bm{z}_{2}^{-1}),
\end{equation}
and
\begin{equation}\label{eq: G0G0}
{G}_{0}(\bm{z})\cdot\overline{G}_{0}(\bm{z}^{-1})
={B}(\bm{z}_{1})\cdot\overline{B}(\bm{z}_{1}^{-1})
\cdot{C}(\bm{z}_{2})\cdot\overline{C}(\bm{z}_{2}^{-1}).
\end{equation}
By substituting  \eqref{eq: G1G1}, \eqref{eq: F1F1}, \eqref{eq: F0F0} and \eqref{eq: G0G0} into \eqref{eq: A0+A1+B0+B1}, and factoring the  polynomial, we have
\begin{equation}\label{eq: (A+B)(C+D)=2^m}
	({A}(\bm{z}_{1})\cdot\overline{A}(\bm{z}_{1}^{-1})+{B}(\bm{z}_{1})\cdot\overline{B}(\bm{z}_{1}^{-1}))\cdot
	({C}(\bm{z}_{2})\cdot\overline{C}(\bm{z}_{2}^{-1})+{D}(\bm{z}_{2})\cdot\overline{D}(\bm{z}_{2}^{-1}))=2^{m+1}.
\end{equation}
On the other hand, according to Property \ref{property1},
\begin{equation}
%	\begin{split}
		{A}(\bm{z}_{1})\cdot\overline{A}(\bm{z}_{1}^{-1})
		+{B}(\bm{z}_{1})\cdot\overline{B}(\bm{z}_{1}^{-1})
		=2^{m_{1}+1}+\sum_{\bm{\tau}_{1}\ne\bm{0}}
		(C_{a}(\bm{\tau})+C_{b}(\bm{\tau}))
		\bm{z}_1^{\bm{\tau}_1},
%	\end{split}
\end{equation}
\begin{equation}
%	\begin{split}
		{C}(\bm{z}_{2})\cdot\overline{C}(\bm{z}_{2}^{-1})
		+{D}(\bm{z}_{2})\cdot\overline{D}(\bm{z}_{2}^{-1})
		=2^{m-m_{1}+1}+\sum_{\bm{\tau}_{2}\ne\bm{0}}
		(C_{c}(\bm{\tau})+C_{d}(\bm{\tau}))
		\bm{z}_{2}^{\bm{\tau}_{2}}.
%	\end{split}
\end{equation}
Since $\bm{z}_{1}\cap\bm{z}_{2}=\varnothing$, comparing the two sides of Equation \eqref{eq: (A+B)(C+D)=2^m}, we have
\begin{equation}\label{eq: AA+BB=2^m}
	{A}(\bm{z}_{1})\cdot\overline{A}(\bm{z}_{1}^{-1})+{B}(\bm{z}_{1})\cdot\overline{B}(\bm{z}_{1}^{-1})=2^{m_{1}+1},
\end{equation}
\begin{equation}\label{eq: CC+DD=2^m}
	{C}(\bm{z}_{2})\cdot\overline{C}(\bm{z}_{2}^{-1})+{D}(\bm{z}_{2})\cdot\overline{D}(\bm{z}_{2}^{-1})=2^{m-m_{1}+1},
\end{equation}
which indicates %both \ref{lem: GAP}, 
$({a}(\bm{x}_{1}),{b}(\bm{x}_{1}))$ and $({c}(\bm{x}_{2}),{d}(\bm{x}_{2}))$ are both Golay array pairs.
%which is the generating function condition of Golay array pair.
\hfill\ensuremath{\square}
\end{proof}

\begin{proposition}\label{pro2}
The Golay array pair $({f}(\bm{x},{x}_{m+1}),{g}(\bm{x},{x}_{m+1}))$ in \eqref{eq: f=f0+f1} and \eqref{eq: g=g0+g1} are standard.
\end{proposition}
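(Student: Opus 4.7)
The plan is to turn the polynomial factorizations leading up to the proposition into additive decompositions of the four corner arrays $f_0, f_1, g_0, g_1$, apply the inductive hypothesis to the two lower-dimensional pairs produced in Proposition \ref{proposition1}, and then verify that reassembling them through $x_{m+1}$ produces exactly the standard form \eqref{eq: thm_fg}. Lemma \ref{lem: F=AB} applied to \eqref{factor-1}--\eqref{factor-2} gives $f_0(\bm{x}) = a(\bm{x}_1) + c(\bm{x}_2)$ and $g_0(\bm{x}) = b(\bm{x}_1) + c(\bm{x}_2)$; the same lemma applied to $F_1(\bm{z}) = \overline{B}^*(\bm{z}_1)\cdot D(\bm{z}_2)$ and $G_1(\bm{z}) = -\overline{A}^*(\bm{z}_1)\cdot D(\bm{z}_2)$ from \eqref{eq: D(z)}, together with the observation that $\overline{B}^*$ is the generating function of $-b(\bar{\bm{x}}_1)$ and that $-1 = \zeta^{q/2}$ (which forces $q$ to be even, as the standard form already presupposes), yields $f_1(\bm{x}) = -b^*(\bm{x}_1) + d(\bm{x}_2)$ and $g_1(\bm{x}) = -a^*(\bm{x}_1) + d(\bm{x}_2) + \tfrac{q}{2}$, where additive normalization constants are absorbed into $a, b, c, d$ as allowed by Lemma \ref{lem: F=AB}.

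Since $\max(m_1, m - m_1) \le m$, the inductive hypothesis supplies standard forms for $(a,b)$ and $(c,d)$: there exist permutations $\pi_1, \pi_2$ on the indices of $\bm{x}_1, \bm{x}_2$ with $a = \tfrac{q}{2}\sum_{k=1}^{m_1 - 1} x_{\pi_1(k)} x_{\pi_1(k+1)} + L_a$, $b = a + \tfrac{q}{2} x_{\pi_1(1)} + \lambda_1$, and analogous expressions for $c, d, \pi_2, L_c, \lambda_2$. Substituting into $f = f_0 + x_{m+1}(f_1 - f_0)$ with $f_1 - f_0 = (-b^*(\bm{x}_1) - a(\bm{x}_1)) + (d(\bm{x}_2) - c(\bm{x}_2))$, the key computation (sketched below) yields $-b^*(\bm{x}_1) - a(\bm{x}_1) \equiv \tfrac{q}{2} x_{\pi_1(m_1)} + C_1 \pmod{q}$, while $d - c = \tfrac{q}{2} x_{\pi_2(1)} + \lambda_2$ is immediate. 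Hence the $x_{m+1}$-factor contributes exactly the two bridging quadratics $\tfrac{q}{2} x_{\pi_1(m_1)} x_{m+1}$ and $\tfrac{q}{2} x_{m+1} x_{\pi_2(1)}$, so defining $\pi = (\pi_1(1), \ldots, \pi_1(m_1), m+1, \pi_2(1), \ldots, \pi_2(m - m_1))$ makes the quadratic part of $f$ equal to $\tfrac{q}{2}\sum_{k=1}^{m} x_{\pi(k)} x_{\pi(k+1)}$. A parallel short computation using $b^* - a^* = (b-a)(\bar{\bm{x}}_1)$ gives $g_0 - f_0 \equiv g_1 - f_1 \equiv \tfrac{q}{2} x_{\pi_1(1)} + \lambda_1 \pmod{q}$, so the $x_{m+1}$-dependence cancels out of $g - f$ and $g = f + \tfrac{q}{2} x_{\pi(1)} + \lambda_1$ with $\pi(1) = \pi_1(1)$, completing the standard form.

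The hard part will be the modular cancellation behind $-b^*(\bm{x}_1) - a(\bm{x}_1) \equiv \tfrac{q}{2} x_{\pi_1(m_1)} + C_1 \pmod{q}$. Expanding $b(\bar{\bm{x}}_1)$ through the standard form of $b$ requires expanding every $(1 - x_{\pi_1(k)})(1 - x_{\pi_1(k+1)})$: the resulting $\tfrac{q}{2}$-weighted quadratic duplicates that of $a$, and since $\tfrac{q}{2} + \tfrac{q}{2} \equiv 0 \pmod{q}$ the quadratic contributions annihilate; the linear cross-terms from that expansion telescope modulo $q$ to $\tfrac{q}{2}(x_{\pi_1(1)} + x_{\pi_1(m_1)})$ (interior indices appear with coefficient $q \equiv 0$), and the stray $\tfrac{q}{2} \bar{x}_{\pi_1(1)}$ coming from $b - a$ eliminates the $x_{\pi_1(1)}$ contribution, while $L_a(\bm{x}_1) + L_a(\bar{\bm{x}}_1)$ is a constant in $\Z_q$. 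What remains is exactly $\tfrac{q}{2} x_{\pi_1(m_1)}$ plus a constant, as claimed. The degenerate cases $m_1 \in \{0, m\}$ must still be checked, but they simply place $x_{m+1}$ at the first or last slot of $\pi$ and the standard form follows in the same way with even less bookkeeping.
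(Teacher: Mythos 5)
Your proposal is correct and follows essentially the same route as the paper: decompose the four corner arrays additively via Lemma \ref{lem: F=AB} and Property \ref{property: F*(z)} (giving $f_0=a+c$, $g_0=b+c$, $f_1=-b^*+d$, $g_1=-a^*+d+\tfrac{q}{2}$), invoke the inductive hypothesis on the two lower-dimensional GAPs from Proposition \ref{proposition1}, and verify that the bridging terms through $x_{m+1}$ produce the standard quadratic path with $\pi(m_1+1)=m+1$; your telescoping computation of $-b^*(\bm{x}_1)-a(\bm{x}_1)\equiv\tfrac{q}{2}x_{\pi_1(m_1)}+C_1$ matches the paper's explicit formula \eqref{eq: fun_a*,b*} for the reversed arrays. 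The only cosmetic difference is that you organize the verification around the differences $f_1-f_0$ and $g-f$ rather than substituting the full expressions, which is an equivalent bookkeeping choice.
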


\begin{proof}
%Suppose ${a}(\bm{x}_{1})$, ${b}(\bm{x}_{1})$, ${c}(\bm{x}_{2})$ and ${d}(\bm{x}_{2})$ are $q$-ary arrays of ${A}(\bm{z}_{1})$, ${B}(\bm{z}_{1})$, ${C}(\bm{z}_{2})$ and ${D}(\bm{z}_{2})$ respectively.
According to \eqref{factor-1}, \eqref{factor-2}, and Lemma \ref{lem: F=AB}, we have
\begin{equation}\label{eq: f_0=a+c}
	{f}_{0}(\bm{x})={a}(\bm{x}_{1})+{c}(\bm{x}_{2}),\qquad {g}_{0}(\bm{x})={b}(\bm{x}_{1})+{c}(\bm{x}_{2}).
\end{equation}
According to \eqref{eq: D(z)}, Lemma \ref{lem: F=AB} and Property \ref{property: F*(z)}, we have
\begin{equation}\label{eq: f_1=-b*+d}
	{g}_{1}(\bm{x})=-{a}^{*}(\bm{x}_{1})+\frac{q}{2}+{d}(\bm{x}_{2}),\qquad {f}_{1}(\bm{x})=-{b}^{*}(\bm{x}_{1})+{d}(\bm{x}_{2}).
\end{equation}

According to Proposition \ref{proposition1},
%\eqref{eq: AA+BB=2^m} and \eqref{eq: CC+DD=2^m},
$({a}(\bm{x}_{1}),{b}(\bm{x}_{1}))$ and $({c}(\bm{x}_{2}),{d}(\bm{x}_{2}))$ form GAPs of size $\bm{2}^{m_1}$ and $\bm{2}^{m-m_1}$ respectively, where $0\leq{m}_1\leq{m}$.
Since Theorem \ref{thm: Type I} holds for ${n}\leq{m}$,
without loss of generality, suppose that 
\begin{equation}\label{eq: fun_a,b}
	\left\{\begin{aligned}
		{a}(\bm{x}_{1})&=\frac{q}{2}\sum_{k=1}^{m_{1}-1}x_{\pi(k)}x_{\pi(k+1)}+ \sum_{k=1}^{m_{1}}c_{k}x_{\pi(k)}+c_{0},\\
		{b}(\bm{x}_{1})&={a}(\bm{x}_{1})+\frac{q}{2}x_{\pi(1)}+{e},
	\end{aligned}\right.
\end{equation}
and 
%\begin{equation}\label{eq: fun_c,d}
%	\left\{\begin{aligned}
%		{c}(\bm{x}_{2})&=\sum_{k=1}^{m_{2}-1}\frac{q}{2}x_{2,k}x_{2,k+1}+ \sum_{k=1}^{m_{2}}c_{2,k}x_{2,k}+c_{2,0},\\
%		{d}(\bm{x}_{2})&={c}(\bm{x}_{2})+\frac{q}{2}x_{2,1}+{e'},
%	\end{aligned}\right.
%\end{equation}
\begin{equation}\label{eq: fun_c,d}
	\left\{\begin{aligned}
	{c}(\bm{x}_{2})&=\frac{q}{2}\sum_{k=m_1+2}^{m}x_{\pi(k)}x_{\pi(k+1)}+ \sum_{k=m_1+2}^{m+1}c_{k}x_{\pi(k)}+c'_{0},\\
	{d}(\bm{x}_{2})&={c}(\bm{x}_{2})+\frac{q}{2}x_{\pi(m_1+2)}+{e'},
	\end{aligned}\right.
\end{equation}
where $c_{k}$ ($0\leq{k}\leq{m+1}, k\ne{m_1+1}$), $c_{0}'$, $e, e'\in \Z_{q}$,
$\pi$ is a permutation of $\{1,2,\dots,m{+}1\}$ which satisfy
\begin{equation}
\left\{\begin{aligned}
&\{1,2,\dots,m_1\}\xrightarrow{\pi}\{1,2,\dots,m_1\},\\
&\{m_1+2,m_1+3,\dots,m+1\}\xrightarrow{\pi}\{m_1+1,m_1+2,\dots,m\},\\
&\pi(m_1{+}1)=m{+}1.
\end{aligned}\right.
\end{equation}
The reverse arrays of ${a}(\bm{x}_{1})$ and ${b}(\bm{x}_{1})$ are given by
\begin{equation}\label{eq: fun_a*,b*}
	\left\{\begin{aligned}
		{a}^{*}(\bm{x}_{1})&
		=\sum_{k=1}^{m_{1}-1}\frac{q}{2}x_{\pi(k)}x_{\pi(k+1)}
		-\sum_{k=1}^{m_{1}}c_{k}x_{\pi(k)}
		+\frac{q}{2}(x_{\pi(1)}+x_{\pi(m_1)})
		+\sum_{k=0}^{m_{1}}c_{k}
		+\frac{q}{2}(m_1-1),\\
		{b}^{*}(\bm{x}_{1})
		&={a}^{*}(\bm{x}_{1})+\frac{q}{2}x_{\pi(1)}+{e}+\frac{q}{2}.
	\end{aligned}\right.
\end{equation}

By substituting \eqref{eq: fun_a,b}, \eqref{eq: fun_c,d}, \eqref{eq: fun_a*,b*} into \eqref{eq: f_0=a+c}, \eqref{eq: f_1=-b*+d},
and substituting \eqref{eq: f_0=a+c}, \eqref{eq: f_1=-b*+d} into \eqref{eq: f=f0+f1}, \eqref{eq: g=g0+g1}, we have
\begin{equation}
\begin{split}
{f}(\bm{x},{x}_{m+1})
&=({a}(\bm{x}_{1})+{c}(\bm{x}_{2}))(1-{x}_{\pi(m_1+1)})+(-{b}^{*}(\bm{x}_{1})+{d}(\bm{x}_{2}))\cdot{x}_{\pi(m_1+1)}\\
&={a}(\bm{x}_{1})+{c}(\bm{x}_{2})+\left(-{a}(\bm{x}_{1})-{b}^{*}(\bm{x}_{1})-{c}(\bm{x}_{2})+{d}(\bm{x}_{2})\right)\cdot{x}_{\pi(m_1+1)}\\
&={a}(\bm{x}_{1})+{c}(\bm{x}_{2})
+\frac{q}{2}{x}_{\pi(m_1)}\cdot{x}_{\pi(m_1+1)}+\frac{q}{2}x_{\pi(m_1+2)}\cdot{x}_{\pi(m_1+1)}\\
&\quad
+\left(\frac{q}{2}m_1-\sum_{k=0}^{m_{1}}c_{k}-c_{0}+{e'}-{e}\right)\cdot{x}_{\pi(m_1+1)}\\
&=\sum_{k=1}^{m}\frac{q}{2}x_{\pi(k)}x_{\pi(k+1)}+ \sum_{k=1}^{m+1}c_{k}x_{\pi(k)}+c_{0}+c'_{0},
\end{split}
\end{equation}
where $c_{m_1+1}=\frac{q}{2}m_1-\sum_{k=0}^{m_{1}}c_{k}-c_{0}+{e'}-{e}$
and
\begin{equation}
\begin{split}
{g}(\bm{x},{x}_{m+1})
&=({b}(\bm{x}_{1})+{c}(\bm{x}_{2}))(1-{x}_{\pi(m_1+1)})+\left(-{a}^{*}(\bm{x}_{1})+\frac{q}{2}+{d}(\bm{x}_{2})\right)\cdot{x}_{\pi(m_1+1)}\\
&={b}(\bm{x}_{1})+{c}(\bm{x}_{2})
+\left(-{b}(\bm{x}_{1})-{a}^{*}(\bm{x}_{1})-{c}(\bm{x}_{2})+{d}(\bm{x}_{2})+\frac{q}{2} \right)\cdot{x}_{\pi(m_1+1)}\\
&={f}(\bm{x},{x}_{\pi(m_1+1)})+\frac{q}{2}x_{\pi(1)}+{e}.\\
\end{split}
\end{equation}
Consequently, we have the Golay array pair $({f}(\bm{x},{x}_{m+1}),{g}(\bm{x},{x}_{m+1}))$ must be standard.
\hfill\ensuremath{\square}
\end{proof}

Proposition \ref{pro2} implies Theorem \ref{thm: Type I} holds for array size $\bm{2}^{(m+1)}$, which complete the process of mathematical induction.

\section*{Declarations}
\begin{itemize}
\item \textbf{Conflict of interest} 
The authors declare that they have no conflicts of interest relevant to the content of this article.
\item \textbf{Availability of data and materials} 
Not applicable.
\end{itemize}

\end{document}